\documentclass[10pt,twocolumn]{article}

\usepackage{amsmath,amssymb}
\usepackage{amsthm}
\usepackage{times}
\usepackage{hyperref}
\usepackage{enumitem}
\usepackage{geometry}
\usepackage{tikz}
\usetikzlibrary{arrows.meta}

\theoremstyle{plain}
\newtheorem{theorem}{Theorem}[section]
\newtheorem{lemma}[theorem]{Lemma}

\theoremstyle{definition}
\newtheorem{definition}[theorem]{Definition}

\theoremstyle{remark}

\title{A Tight Lower Bound for Cycle Detection in Grid Graphs}
\author{Andrew Au\\[0.2em]
\small Independent Researcher\\[0.2em]
\small Corresponding author: \texttt{cshung@gmail.com}}
\date{\vspace{-1.5em}}

\begin{document}
\maketitle

\begin{abstract}
We prove that any algorithm for detecting cycles in an $m \times n$ grid graph, where cells are colored and adjacency is defined by matching colors, must read all $mn$ cells in the worst case for all grids with $m \geq 2$ and $n \geq 2$. The proof is by adversary argument: we construct an adaptive adversary that maintains ambiguity---one completion containing a cycle and one without---until the final cell is read. The construction proceeds by tiling the grid with $2 \times 2$, $2 \times 3$, $3 \times 2$, and $3 \times 3$ blocks, each equipped with an independent block adversary, composed via a checkerboard isolation scheme.
\end{abstract}

\noindent\textbf{Keywords:}
cycle detection; grid graph; query complexity; adversary argument; lower bound; game-theoretic construction.

\section{Introduction}

Consider an $m \times n$ grid where each cell is assigned a character. Two adjacent cells (sharing an edge) with the same character are connected by an edge, forming an implicit graph. The \emph{cycle detection problem}~\cite{leetcode1559} asks whether this graph contains a cycle.

Solving this problem is straightforward---it reduces to standard undirected graph search. What is less obvious is the query complexity: how many cells must any correct algorithm inspect?

For arbitrary graphs, a simple lower bound argument shows that every edge must be examined, since a single edge can determine the presence or absence of a cycle. However, the grid graph is not arbitrary---it is implicitly defined by the grid coloring. Must an algorithm still read every cell?

We show the answer is yes, for all non-degenerate grids.

\begin{theorem}\label{thm:main}
For any grid of size $m \times n$ with $m \geq 2$ and $n \geq 2$, any deterministic algorithm that correctly decides whether the grid graph contains a cycle must read all $mn$ cells in the worst case.
\end{theorem}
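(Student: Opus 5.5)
We argue by an adversary. Against any deterministic algorithm, an adaptive adversary answers each cell query with a colour. It maintains the invariant that, while at least one cell is still unread, the grid admits two completions consistent with all answers so far: one whose grid graph contains a cycle and one that is acyclic. If the algorithm stops before reading all $mn$ cells, the adversary completes the grid to contradict whatever the algorithm outputs.

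The construction is modular. First we partition the $m\times n$ grid into blocks of sizes $2\times2$, $2\times3$, $3\times2$ and $3\times3$. This is possible because every integer $k\ge2$ is a sum of $2$s and $3$s, so we can split rows and columns separately and take products. Next we give the blocks colours from a checkerboard scheme: each block $B$ gets a private pair of colours $\{a_B,b_B\}$, and the palettes of orthogonally adjacent blocks are disjoint. Four palettes suffice, indexed by (row-block parity, column-block parity), but it is simplest to give every block its own palette. Then no edge of the grid graph crosses a block boundary, the graph is the disjoint union of the per-block graphs, and it has a cycle iff some block does.

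Within each block we design a local adversary with the following property. Until the last cell of the block is read, the block admits both a cyclic and an acyclic completion. Meanwhile all other completions the adversary keeps open for that block are acyclic, so the block can be "closed off" harmlessly.

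The global adversary steers the combination as follows. While a block still has unread cells, it commits to the acyclic completion of that block in its accounting. When a block's last cell is read, the local adversary answers so that the block is acyclic, unless this is the last unread cell of the whole grid. Hence when some cell remains unread, every fully read block is acyclic. The block containing that cell can still be completed to be cyclic or acyclic, and every other partial block can be completed acyclically. Together these give both global completions, which proves the invariant.

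The main obstacle is the block adversaries themselves, particularly the finite case analysis showing that a $2\times2$, $2\times3$, $3\times2$ or $3\times3$ block with two colours can keep a cycle "open" until its final cell regardless of query order.

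For $2\times2$ this is easy. A cycle exists iff all four cells share a colour. The adversary answers $a$ to the first three queries, and the last cell decides.

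For larger blocks we try the strategy "answer $a$ unless that would create a monochromatic $2\times2$ square of $a$'s completable only in one way". Concretely, we fix the target cyclic completion, for instance the ring of the $3\times3$ border in colour $a$ with centre $b$, or two stacked all-$a$ $2\times2$ squares. We then answer consistently with it, adaptively swapping which cycle we aim for once the algorithm reads cells. The key is to maintain a family of possible cycles so that each unread cell is the unique cell missing from some cycle-completion. The acyclic completion must simultaneously be guaranteed, e.g.\ by colouring the remaining cells $b$ in a pattern (stripes) containing no $2\times2$ or ring monochromatic structure, using the fact that in a grid a cycle requires a monochromatic $2\times2$ square or a larger enclosing ring.

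I expect this to need an explicit, possibly computer-checked, case analysis for the $2\times3$ and $3\times3$ blocks. If two colours per block prove insufficient, I would allow three colours per block; that only enlarges the palettes and does not affect the composition.
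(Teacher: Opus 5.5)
Your composition step is the same as the paper's and is sound: the tiling into $2\times2$, $2\times3$, $3\times2$, $3\times3$ blocks, checkerboard palettes so no edge crosses a block boundary, and closing each finished block acyclically unless it holds the globally last cell. Together with your $2\times2$ adversary, this already proves the theorem when $m$ and $n$ are both even.

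The gap is that as soon as one side is odd you need the $2\times3$/$3\times2$ and $3\times3$ block adversaries, and you never construct them. ``Answer $a$ unless \dots, adaptively swapping which cycle we aim for'' is not a defined strategy. The whole difficulty is deciding exactly when to spend a $b$, and the natural choices fail:
\begin{itemize}
\item In the $3\times3$ block with target ``border ring $a$, centre $b$'', an algorithm that reads the eight border cells first sees a complete $a$-ring. It then knows a cycle exists without reading the centre.
\item In the $3\times2$ block, answering $a$ until one cell remains loses whenever that cell is a corner. The opposite $2\times2$ sub-block is then already monochromatic.
\end{itemize}

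The missing ingredient is an explicit trigger for the $b$; the paper gives one by hand in a few lines, not by computer search.

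\textbf{The $3\times3$ block.} Answer $a$ on border cells. If the centre is queried while some border cell is unread, answer $b$ there. This destroys all four $2\times2$ squares and leaves the border ring open until its last cell. If instead the eighth border cell is queried while the centre is unread, answer $b$ to it. A single border cell lies in at most two of the four $2\times2$ squares through the centre, so the centre alone decides.

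\textbf{The $3\times2$ block}, with rows $\{1,2\},\{3,4\},\{5,6\}$. Answer $a$ until both cells of an outer row have been read, and answer $b$ to the second of them. This kills that $2\times2$ sub-block. The other sub-block contains every still-unread cell and at most three read cells, all $a$, so it reduces to the $2\times2$ game.

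Until rules like these are written down and verified, your argument does not cover any grid with an odd side.
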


The proof uses an adversary argument~\cite{yao77,blum-impagliazzo}. We construct an adaptive adversary that responds to the algorithm's queries while maintaining two consistent completions of the grid: one containing a cycle and one cycle-free. The adversary sustains this ambiguity until the very last cell is read.

The construction is inherently game-theoretic: the adversary and the algorithm play a sequential game in which the algorithm chooses which cell to query and the adversary chooses the response. The adversary's strategy is reactive---it adapts to the algorithm's choices to maximally delay resolution. Each base-case adversary is formalized as a finite state machine whose transitions depend on the algorithm's moves, making explicit the game tree underlying the lower bound.

\section{Preliminaries}

\begin{definition}[Grid Graph]
Given an $m \times n$ grid with a character assignment $c : [m] \times [n] \to \Sigma$, the \emph{grid graph} $G_c$ has vertex set $[m] \times [n]$ and edge set
\[
E = \{((i,j),(i',j')) : |i-i'| + |j-j'| = 1 \text{ and } c(i,j) = c(i',j')\}.
\]
\end{definition}

\begin{definition}[Adversary]
An adversary is an adaptive strategy that responds to each cell query with a character from $\Sigma$, such that after every prefix of queries, there exist two valid completions of the unread cells: one whose grid graph contains a cycle, and one whose grid graph is cycle-free.
\end{definition}

If such an adversary exists, no deterministic algorithm can decide cycle existence before reading all cells, establishing Theorem~\ref{thm:main}.

\section{Base Cases}

\subsection{The \texorpdfstring{$2 \times 2$}{2x2} Case}

\begin{lemma}\label{lem:2x2}
There exists an adversary for the $2 \times 2$ grid.
\end{lemma}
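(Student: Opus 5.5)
In a $2\times 2$ grid the only possible cycle is the 4-cycle through all four cells. It exists exactly when all four cells receive the same character. So the adversary's task reduces to this: after every prefix of at most three queries, the revealed cells must admit a completion in which all four cells are equal, and also a completion that is not monochromatic. Over an alphabet with at least two symbols, the second is always available while some cell is unread: just give an unread cell a character different from a revealed one, or, if nothing is revealed, use two distinct characters.

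The adversary I propose is the simplest one. Answer every query with the same fixed character, say $a$. I would argue the invariant by cases on the number $k$ of queries answered so far, for $0\le k\le 3$.

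\emph{Cycle completion.} All $k$ revealed cells carry $a$. Setting every unread cell to $a$ gives the monochromatic grid, whose graph is the 4-cycle.

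\emph{Acyclic completion.} Since $k\le 3$, at least one cell is unread. Set one unread cell to a character $b\neq a$ and all other unread cells to $a$. That cell then has no incident edges. The remaining graph is a path on at most three vertices, so the grid graph is a forest.

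Because this reasoning does not depend on which cells were queried, the strategy is trivially adaptive. In the finite-state-machine language the paper uses, it is one state with one transition. The ambiguity survives until the fourth cell is read, which gives the lemma.

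I do not expect any real obstacle in the $2\times 2$ case. The only point to state carefully is the structural fact that the $2\times 2$ grid graph has exactly one cycle, which holds because its underlying graph is $C_4$.

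The more delicate issue concerns the later composition. The block adversaries will presumably be combined under the checkerboard isolation scheme, which may require each block to use its own characters so that blocks never connect to one another. I would therefore phrase the lemma's adversary to be parametrized by an arbitrary pair of distinct characters $(a,b)$. This lets later sections assign disjoint pairs to different blocks without revisiting the argument.
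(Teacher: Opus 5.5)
Your proof is correct and is essentially the paper's: answer $\texttt{a}$ to the first three queries, so the fourth cell alone decides between the monochromatic $4$-cycle and an acyclic grid. One caution for the composition you anticipate: the paper leaves the fourth answer free (State~1) because in Lemma~\ref{lem:even} every block not holding the globally last cell must answer the non-cycle character on its final read, whereas your always-$\texttt{a}$ block would expose a cycle as soon as any block is fully read, so the later sections cannot reuse it ``without revisiting the argument.''
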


\begin{proof}
The adversary operates as a state machine with two states.

\textbf{State 0 (initial).} On each query: if at least two cells remain unread after this query, return \texttt{a} and remain in State~0. Otherwise (exactly one cell remains unread), return \texttt{a} and transition to State~1.

\textbf{State 1.} On the final query, the adversary chooses: return \texttt{a} to produce a cycle (all four cells are \texttt{a}), or \texttt{b} to produce no cycle (three \texttt{a}'s and one \texttt{b}).

This maintains ambiguity until the fourth and final read.
\end{proof}

\begin{figure}[ht]
\centering
\begin{tikzpicture}[scale=1.2]
  % Cycle completion
  \begin{scope}[xshift=0cm]
    \draw[step=1] (0,0) grid (2,2);
    \node at (0.5,1.5) {\texttt{a}};
    \node at (1.5,1.5) {\texttt{a}};
    \node at (0.5,0.5) {\texttt{a}};
    \node at (1.5,0.5) {\texttt{a}};
    \node[below] at (1,0) {\small Cycle};
  \end{scope}
  % No cycle
  \begin{scope}[xshift=3.5cm]
    \draw[step=1] (0,0) grid (2,2);
    \node at (0.5,1.5) {\texttt{a}};
    \node at (1.5,1.5) {\texttt{a}};
    \node at (0.5,0.5) {\texttt{a}};
    \node at (1.5,0.5) {\texttt{b}};
    \node[below] at (1,0) {\small No cycle};
  \end{scope}
\end{tikzpicture}
\caption{The $2 \times 2$ adversary's two completions. The last cell (bottom-right) determines the outcome.}
\label{fig:2x2}
\end{figure}
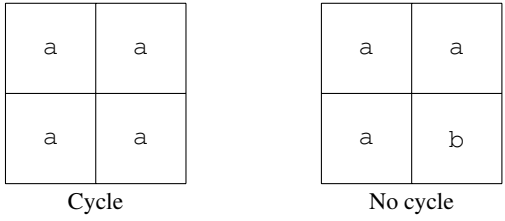

\subsection{The \texorpdfstring{$3 \times 3$}{3x3} Case}

\begin{lemma}\label{lem:3x3}
There exists an adversary for the $3 \times 3$ grid.
\end{lemma}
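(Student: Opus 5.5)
The plan is to replace the hand-built state machine of Lemma~\ref{lem:2x2} with an invariant-driven adversary, and then to check by a finite case analysis that the invariant can always be maintained on the $3\times 3$ grid. First I will list the cycles that can occur. In a $3\times 3$ grid graph, every cycle contains either a monochromatic $2\times 2$ square (there are four such squares) or the full ring of eight boundary cells. The adversary uses one distinguished letter \texttt{a} and answers every other query with a fresh letter that appears nowhere else. Let $A$ be the set of read cells colored \texttt{a} and $U$ the set of unread cells. The invariant after each query is:
\begin{enumerate}[label=(\roman*)]
\item $A$ induces no cycle;
\item $A \cup U$ contains a cycle.
\end{enumerate}
Condition (i) gives the cycle-free completion: color each unread cell with its own fresh letter. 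Condition (ii) gives the cyclic completion: color every unread cell \texttt{a}. Fresh letters never create edges, so these two completions behave as claimed.

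The greedy rule for a query $q$ with $|U|\ge 2$ is as follows. Answer \texttt{a} if $A\cup\{q\}$ is still acyclic; otherwise answer a fresh letter. Answering \texttt{a} preserves (ii) automatically. Answering fresh preserves (i), but it preserves (ii) only if $A\cup U\setminus\{q\}$ still contains a cycle. When $|U|=1$, the final answer is free, exactly as in State~1 of Lemma~\ref{lem:2x2}. So the proof reduces to showing that the \emph{bad event} never occurs. The bad event is: $A\cup\{q\}$ contains a cycle $C$, yet every cycle in $A\cup U$ passes through $q$.

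The main obstacle is ruling out this bad event. Plain greedy may fail, since answering \texttt{a} too eagerly can leave $q$ as a cut point of all remaining cycles. I will first test greedy directly by using the cycle classification. When the bad event threatens, $C$ is a $2\times 2$ square or the ring, all of whose cells except $q$ are already in $A$. Any other unread cell $u$ lies in some square or in the ring. I would try to show that one of these, together with cells of $A$, avoids $q$. For example, if $C$ is a square, then the opposite square or the ring through $u$ should avoid $q$ once the center is accounted for.

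If plain greedy admits a counterexample, I will refine the rule with a priority: treat the center cell specially, answering it with a fresh letter unless it is the last or second-to-last unread cell. This keeps the eight-cell ring available as a cycle that does not depend on any one interior square. With this refinement, the remaining check is a small enumeration over which boundary cell is $q$ and where the other unread cells lie. The symmetries of the square cut this down to a handful of configurations. I would present the final adversary as a short state machine, in the style of Lemma~\ref{lem:2x2}, with the verification tabulated.
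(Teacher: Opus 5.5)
Your framework is sound. The classification is correct: a coloring of the $3\times 3$ grid has a cycle iff some $2\times 2$ square or the eight-cell ring is monochromatic. Invariants (i)--(ii) with fresh letters do reduce the lemma to excluding your bad event. But plain greedy genuinely fails, and the heuristic you sketch for it (``the opposite square or the ring through $u$ avoids $q$'') is exactly what breaks. One earlier fresh answer on an edge cell kills the ring and two squares at once. Number the cells $1,\dots,9$ row by row, so $5$ is the center, and let the algorithm query $5,4,7,8,1,2$:
\begin{enumerate}[label=(\alph*)]
\item Greedy answers \texttt{a} to $5,4,7$.
\item It answers fresh to $8$, since $8$ would complete $\{4,5,7,8\}$; invariant (ii) survives via $\{1,2,4,5\}$.
\item It answers \texttt{a} to $1$.
\item Now $2$ would complete $\{1,2,4,5\}$. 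The only other candidates are $\{2,3,5,6\}$, which contains $2$, and $\{5,6,8,9\}$ and the ring, which both contain the dead cell $8$.
\end{enumerate}
So the bad event occurs with $3,6,9$ still unread.

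You must therefore commit to your refinement, and once you do, it collapses to exactly the paper's adversary. Before the center is read, a boundary query can close a cycle only if it is the eighth boundary cell read, because every square needs the center. If the center is second-to-last, your greedy clause answers fresh anyway: the seven \texttt{a}'s plus the center contain a square, since the unread boundary cell lies in at most two of the four squares. So the rule is: the center is fresh unless it is last; boundary cells get \texttt{a}, except that the eighth boundary cell, if read while the center is unread, is fresh. This is the paper's State~0 transition with \texttt{b} as the ``fresh'' letter.

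The verification is then not an enumeration over configurations but the paper's two cases. If the center went fresh, the ring is the only possible cycle, and it is intact because no boundary cell has received a non-\texttt{a} answer. If the eighth boundary cell went fresh, the center is last, and that cell lies in at most two squares, so a square through the center survives. Your invariant formulation states more explicitly than the paper what must hold at every prefix.

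Finally, at most two cells ever receive non-\texttt{a} answers, so \texttt{b} can replace every fresh letter. For this you must take the cycle-free completion to be the strategy's own continuation ending in \texttt{b}, rather than ``all unread cells fresh.'' That keeps the block within the alphabet $\{\texttt{a},\texttt{b}\}$ the composition assigns it.
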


\begin{proof}
The adversary operates as a state machine with three states.

\textbf{State 0 (initial).} On each query: if the queried cell is not the center and fewer than eight boundary cells have been read, return \texttt{a} and remain in State~0. Otherwise (the queried cell is the center, or it is the eighth boundary cell read), return \texttt{b} and transition to State~1.

\textbf{State 1.} Return \texttt{a} for every query until exactly one cell remains unread, then transition to State~2.

\textbf{State 2.} On the final query, the adversary chooses: return \texttt{a} to produce a cycle, or \texttt{b} to produce no cycle.

\medskip

We verify that both completions are valid in each case.

\textbf{Case 1: The center triggers the transition to State~1.} The center received \texttt{b}. All read boundary cells are \texttt{a}, with one boundary cell remaining. In State~2, returning \texttt{a} completes the outer ring of \texttt{a}'s (creating a cycle); returning \texttt{b} breaks the ring (no cycle).

\textbf{Case 2: The eighth boundary cell triggers the transition to State~1.} Seven boundary cells received \texttt{a}; the eighth received \texttt{b}, breaking the outer ring. The center is the last cell read. In State~2, returning \texttt{a} creates a $2 \times 2$ cycle through the center; returning \texttt{b} prevents any cycle. The key observation is that no single boundary \texttt{b} can block all four $2 \times 2$ sub-squares containing the center, since each boundary cell belongs to at most two of the four sub-squares.
\end{proof}

\begin{figure}[ht]
\centering
\begin{tikzpicture}[scale=1.0]
  % Case 1: center = b, outer ring choice
  \begin{scope}[xshift=0cm]
    \draw[step=1] (0,0) grid (3,3);
    \node at (0.5,2.5) {\texttt{a}}; \node at (1.5,2.5) {\texttt{a}}; \node at (2.5,2.5) {\texttt{a}};
    \node at (0.5,1.5) {\texttt{a}}; \node at (1.5,1.5) {\texttt{b}}; \node at (2.5,1.5) {\texttt{a}};
    \node at (0.5,0.5) {\texttt{a}}; \node at (1.5,0.5) {\texttt{a}}; \node at (2.5,0.5) {\texttt{?}};
    \node[below] at (1.5,0) {\small Case 1};
  \end{scope}
  % Case 2: ring broken, center choice
  \begin{scope}[xshift=5cm]
    \draw[step=1] (0,0) grid (3,3);
    \node at (0.5,2.5) {\texttt{a}}; \node at (1.5,2.5) {\texttt{a}}; \node at (2.5,2.5) {\texttt{a}};
    \node at (0.5,1.5) {\texttt{a}}; \node at (1.5,1.5) {\texttt{?}}; \node at (2.5,1.5) {\texttt{a}};
    \node at (0.5,0.5) {\texttt{b}}; \node at (1.5,0.5) {\texttt{a}}; \node at (2.5,0.5) {\texttt{a}};
    \node[below] at (1.5,0) {\small Case 2};
  \end{scope}
\end{tikzpicture}
\caption{The $3 \times 3$ adversary at the final read (\texttt{?}). Case~1: the center is \texttt{b}, the last boundary cell decides the outer ring. Case~2: the ring is broken by one \texttt{b}, the center decides a $2 \times 2$ cycle.}
\label{fig:3x3}
\end{figure}
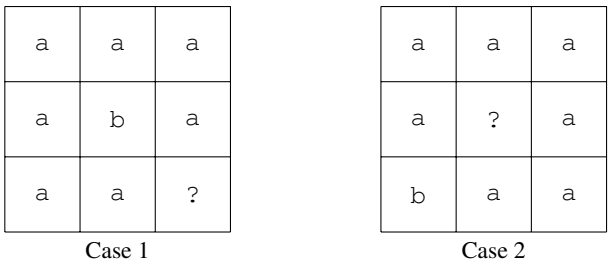

\subsection{The \texorpdfstring{$2 \times 3$}{2x3} Case}

\begin{lemma}\label{lem:2x3}
There exists an adversary for the $2 \times 3$ grid.
\end{lemma}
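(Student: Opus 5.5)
The plan is to reduce cycle detection on the $2\times 3$ grid to the status of its two $2\times 2$ sub-squares, and then reuse the State~0/State~1 idea from Lemma~\ref{lem:2x2}. The adversary will kill one sub-square with a single \texttt{b} and keep the other one alive until the very last read.

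\textbf{Step 1: Characterize cycles.} Index the columns $1,2,3$. Let $L$ be the sub-square on columns $1,2$ and $R$ the one on columns $2,3$. I claim $G_c$ has a cycle iff $L$ or $R$ is monochromatic. Every cycle in a grid graph has length at least $4$. In $2\times 3$ the only cycles of the underlying grid are the two $4$-cycles around $L$ and $R$ and the $6$-cycle through all cells. The $6$-cycle being monochromatic forces all six cells equal, which includes $L$. So every cycle contains a monochromatic $2\times 2$ sub-square. The converse is immediate.

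\textbf{Step 2: The adversary.} Call columns $1$ and $3$ \emph{outer}; column~$2$ lies in both squares.
\begin{itemize}
\item \textbf{State~0.} Answer \texttt{a} to every query, with one exception. Suppose a query reads the second cell of an outer column, this is the first outer column to be completed, and at least one cell remains unread afterward. Then answer \texttt{b} and move to State~1. The square containing that column is now dead.
\item \textbf{State~1.} Answer \texttt{a} until one cell remains.
\item \textbf{Final read (from either state).} Choose \texttt{a} or \texttt{b} freely.
\end{itemize}
Middle-column cells always receive \texttt{a} before the final read. This is the point of the design: a \texttt{b} in column~2 would kill both squares at once, which a naive ``kill the first square to complete'' rule would do.

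\textbf{Step 3: Verification.} First suppose no transition occurred before the last read. Then every read cell is \texttt{a}. The final cell lies in some sub-square, so answering \texttt{a} makes all six cells \texttt{a} and gives a cycle. Answering \texttt{b} breaks both squares, since every cell of a $2\times 3$ grid lies in $L$ or $R$ and a single \texttt{b} shares no square with another \texttt{b}.

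Otherwise, say column~$1$ was completed first and received the \texttt{b}, so $L$ is dead. The final unread cell cannot be in column~$1$, so it lies in columns~$2$ or $3$, that is, in $R$. All other cells of $R$ read so far are \texttt{a}, because the \texttt{b} was placed in column~$1$ and column~$3$ was completed second, hence received \texttt{a}. Answering \texttt{a} makes $R$ monochromatic, giving a cycle. Answering \texttt{b} breaks $R$, and $L$ is already broken, so by Step~1 there is no cycle. The case where column~$3$ is completed first is symmetric.

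\textbf{Main obstacle.} The delicate step is Step~2's choice of which cell absorbs the \texttt{b}. It must be an outer-column cell, so that exactly one square dies, and it must be placed while the surviving square still has an unread cell. Placing it on the completion of an outer column guarantees both conditions, and this is the fact to check carefully in Step~3.
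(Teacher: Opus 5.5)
Your adversary is the paper's. Your outer columns are its partner pairs $\{1,2\}$ and $\{5,6\}$, and your middle column is its unpartnered cells $3,4$. The first completed pair absorbs the single \texttt{b}. Your main-case verification matches the paper's: the surviving square contains the last unread cell and is otherwise all \texttt{a}. Your Step~1 characterization is a useful addition, since the paper asserts ``no cycle'' without explicitly ruling out the $6$-cycle.

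There is one false step, in the first case of Step~3. Suppose no transition had occurred and the last cell lay in column~$1$. Answering \texttt{b} would break $L$ but leave $R$ entirely \texttt{a}, so both answers would produce a cycle and the adversary would fail. The inference ``every cell lies in $L$ or $R$, so a single \texttt{b} breaks both squares'' needs the cell to lie in $L$ \emph{and} $R$, i.e.\ in column~$2$. Your proof survives only because this case is empty, and you should say so. After five reads, the one unread cell lies in at most one outer column. So the other outer column was completed at some read $k\le 5$ with $6-k\ge 1$ cells still unread, and the transition fired then. Replace that paragraph with this pigeonhole observation. It also shows that your extra guard ``at least one cell remains unread'' never binds, so your rule coincides exactly with the paper's.
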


\begin{proof}
Label the six cells as follows:
\[
\begin{array}{cc}
1 & 2 \\
3 & 4 \\
5 & 6
\end{array}
\]
The grid contains two $2 \times 2$ sub-blocks: the \emph{top block} $\{1,2,3,4\}$ and the \emph{bottom block} $\{3,4,5,6\}$. Define \emph{partner pairs}: cells $1$ and $2$ are partners, cells $5$ and $6$ are partners. Cells $3$ and $4$ have no partners.

The adversary operates as a state machine with three states.

\textbf{State 0 (initial).} On each query: if the queried cell has no partner, or its partner has not yet been read, return \texttt{a} and remain in State~0. Otherwise (the queried cell has a partner and that partner has already been read), return \texttt{b} and transition to State~1.

\textbf{State 1.} Return \texttt{a} for every query until exactly one cell remains unread, then transition to State~2.

\textbf{State 2.} On the final query, the adversary chooses: return \texttt{a} to produce a cycle, or \texttt{b} to produce no cycle.

\medskip

We verify that the transition to State~1 always leaves the adversary with a viable surviving block.

When a partner pair is completed (triggering the \texttt{b} and the transition), two facts hold: (A)~the other partner pair has not yet been completed, since every prior response was \texttt{a}---had the other pair completed first, it would have triggered the transition instead; and (B)~every cell read before this moment received \texttt{a}. The completed pair has one \texttt{a} and one \texttt{b}, breaking the $2 \times 2$ cycle in its block. In the other block, at most three of its four cells have been read, and all reads so far are \texttt{a}. This is precisely the $2 \times 2$ situation: the adversary continues returning \texttt{a} in State~1 until the last cell, then makes its choice in State~2.

In State~2, returning \texttt{a} completes this surviving block as all \texttt{a}'s (creating a $2 \times 2$ cycle); returning \texttt{b} breaks it (no cycle). This maintains ambiguity until the sixth and final read.
\end{proof}

\begin{figure}[ht]
\centering
\begin{tikzpicture}[scale=1.0]
  % Pair (1,2) completed, bottom block survives
  \begin{scope}[xshift=0cm]
    \draw[step=1] (0,0) grid (2,3);
    \node at (0.5,2.5) {\texttt{a}}; \node at (1.5,2.5) {\texttt{b}};
    \node at (0.5,1.5) {\texttt{a}}; \node at (1.5,1.5) {\texttt{a}};
    \node at (0.5,0.5) {\texttt{a}}; \node at (1.5,0.5) {\texttt{?}};
    \draw[thick,dashed,rounded corners] (-0.1,-0.1) rectangle (2.1,2.1);
    \node[below] at (1,0) {\small Top pair done};
  \end{scope}
  % Pair (5,6) completed, top block survives
  \begin{scope}[xshift=4.5cm]
    \draw[step=1] (0,0) grid (2,3);
    \node at (0.5,2.5) {\texttt{?}}; \node at (1.5,2.5) {\texttt{a}};
    \node at (0.5,1.5) {\texttt{a}}; \node at (1.5,1.5) {\texttt{a}};
    \node at (0.5,0.5) {\texttt{a}}; \node at (1.5,0.5) {\texttt{b}};
    \draw[thick,dashed,rounded corners] (-0.1,0.9) rectangle (2.1,3.1);
    \node[below] at (1,0) {\small Bottom pair done};
  \end{scope}
\end{tikzpicture}
\caption{The $2 \times 3$ adversary at the final read (\texttt{?}). When one partner pair is completed with a \texttt{b}, the surviving $2 \times 2$ block (dashed) reduces to the base case.}
\label{fig:2x3}
\end{figure}
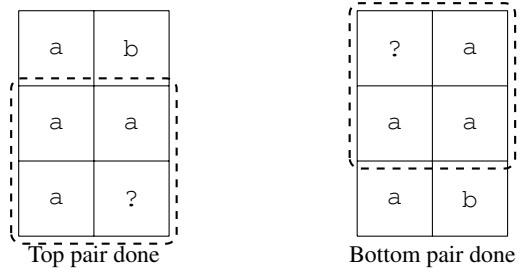

\section{Composition: Even Dimensions}

\begin{lemma}\label{lem:even}
For any $2p \times 2q$ grid with $p, q \geq 1$, there exists an adversary.
\end{lemma}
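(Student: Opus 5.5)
Partition the $2p \times 2q$ grid into $pq$ disjoint $2 \times 2$ blocks $B_{r,s}$ ($1 \le r \le p$, $1 \le s \le q$), where $B_{r,s}$ occupies rows $2r-1,2r$ and columns $2s-1,2s$. Give each block its own alphabet: block $B_{r,s}$ uses two characters $\texttt{a}_{r,s}$ and $\texttt{b}_{r,s}$. Assign them by a checkerboard rule, so that blocks with $r+s$ even draw from one family of characters and blocks with $r+s$ odd draw from another, disjoint family. The simplest choice is to make all $2pq$ characters distinct. Then no two cells in different blocks ever share a character, so no edge of $G_c$ crosses a block boundary. The grid graph is therefore the disjoint union of the $pq$ block graphs, and $G_c$ has a cycle iff some block graph has one.

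The composed adversary runs an independent copy of the adversary of Lemma~\ref{lem:2x2} in each block, relabelling \texttt{a},\texttt{b} to $\texttt{a}_{r,s},\texttt{b}_{r,s}$. It must also coordinate so that only the globally last cell decides the answer. When a query lands in block $B$, it forwards the query to $B$'s state machine. When that machine reaches State~1, meaning its final cell is being read, the composed adversary acts as follows. If some other block still has unread cells, it answers $\texttt{b}_{r,s}$, closing $B$ cycle-free (three \texttt{a}'s and one \texttt{b}). If this is the last unread cell of the whole grid, it reserves the State~1 choice as the global choice.

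\textbf{Invariant.} After every prefix of queries, each fully read block is cycle-free, and each partially read block has all read cells equal to its \texttt{a} character. Check the two completions against this invariant:
\begin{itemize}[nosep]
\item \emph{Cycle-free completion:} fill every partially read block with its \texttt{a} and exactly one unread cell with its \texttt{b}. Each such block has at least one unread cell, so this is possible, and no block contains a cycle.
\item \emph{Cycle completion:} fill some partially read block entirely with \texttt{a}, giving a 4-cycle. At least one unread cell exists before the final read, so such a block exists.
\end{itemize}
At the final read, the last block is in State~1 and the two responses give a cycle and no cycle respectively. So ambiguity survives until all $4pq$ cells are read.

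The main obstacle is the isolation step: making sure the block alphabets really prevent cross-block edges and that the per-block adversaries do not interfere. With fully distinct characters per block this is immediate. The checkerboard scheme, which needs only two families of characters, works because orthogonally adjacent blocks always have opposite parity. The remaining care point is the rule forcing early-finishing blocks to close cycle-free, which is what keeps the cycle-free completion available throughout.
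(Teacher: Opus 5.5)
Your proof is correct and is essentially the paper's argument: checkerboard-isolated $2\times 2$ blocks each run the $2\times 2$ adversary, and every block that finishes early is closed cycle-free so that only the globally last cell decides. Your explicit invariant and the two completions spell out what the paper states only briefly, and the one difference is minor: your ``simplest choice'' of $2pq$ distinct characters needs an alphabet that grows with the grid, while the checkerboard variant you also mention (the one the paper uses) needs only the four letters \texttt{a}, \texttt{b}, \texttt{c}, \texttt{d}.
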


\begin{proof}
Partition the grid into a $p \times q$ array of $2 \times 2$ blocks. Color the blocks in a checkerboard pattern. Assign independent block adversaries as follows:
\begin{itemize}
    \item \textbf{Black blocks} use alphabet $\{\texttt{a}, \texttt{b}\}$ with the $2 \times 2$ adversary from Lemma~\ref{lem:2x2}.
    \item \textbf{White blocks} use alphabet $\{\texttt{c}, \texttt{d}\}$ with the same strategy.
\end{itemize}

Since adjacent blocks use disjoint alphabets, no edge crosses a block boundary. Each block's cycle is independent of every other block. An algorithm must fully resolve ambiguity within every block, requiring all cells of every block to be read.

The block adversaries coordinate only in one respect: all blocks report ``no cycle'' until the very last cell read across the entire grid. Whichever block contains the last unread cell makes the final decision.
\end{proof}

\begin{figure}[ht]
\centering
\begin{tikzpicture}[scale=0.8]
  \foreach \i in {0,1,2,3} {
    \foreach \j in {0,1,2} {
      \pgfmathtruncatemacro{\parity}{mod(\i+\j,2)}
      \ifnum\parity=0
        \fill[gray!25] (2*\i,2*\j) rectangle (2*\i+2,2*\j+2);
        \node at (2*\i+1,2*\j+1) {\small\texttt{a,b}};
      \else
        \fill[white] (2*\i,2*\j) rectangle (2*\i+2,2*\j+2);
        \node at (2*\i+1,2*\j+1) {\small\texttt{c,d}};
      \fi
    }
  }
  \draw[step=2] (0,0) grid (8,6);
\end{tikzpicture}
\caption{Checkerboard isolation for an $8 \times 6$ grid tiled with $2 \times 2$ blocks. Adjacent blocks use disjoint alphabets, preventing cross-block edges. Each block runs an independent adversary.}
\label{fig:checkerboard}
\end{figure}

\section{Composition: General Dimensions}

\begin{lemma}\label{lem:tiling}
Any $m \times n$ grid with $m \geq 2$ and $n \geq 2$ can be tiled by $2 \times 2$, $2 \times 3$, $3 \times 2$, and $3 \times 3$ blocks.
\end{lemma}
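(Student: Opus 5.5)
The plan is to reduce the two-dimensional tiling to a one-dimensional partition. The key observation is that every integer $k \geq 2$ can be written as a sum of parts each equal to $2$ or $3$. If $k$ is even, take $k/2$ copies of $2$. If $k$ is odd, then $k \geq 3$, so take one $3$ together with $(k-3)/2$ copies of $2$. I will use this once for the rows and once for the columns.

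First, write $m = r_1 + \cdots + r_s$ with each $r_i \in \{2,3\}$. Cut the rows of the grid into consecutive horizontal bands of heights $r_1, \dots, r_s$. Second, write $n = c_1 + \cdots + c_t$ with each $c_j \in \{2,3\}$. Cut the columns into consecutive vertical bands of widths $c_1, \dots, c_t$. The intersection of the $i$-th horizontal band with the $j$-th vertical band is an $r_i \times c_j$ rectangle. Since $r_i, c_j \in \{2,3\}$, each such rectangle is one of $2\times 2$, $2\times 3$, $3\times 2$, or $3\times 3$.

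These $st$ rectangles are pairwise disjoint and together cover $[m]\times[n]$. This holds because each cell $(x,y)$ lies in exactly one row band and exactly one column band. So the rectangles form the required tiling.

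I expect no step to be a real obstacle; the only care needed is the base of the decomposition. The hypothesis $k \geq 2$ rules out $k = 1$, which has no decomposition into parts of size $2$ or $3$, and it guarantees that when $k$ is odd we have $k \geq 3$, so the single $3$ fits. This is exactly why the lemma (and Theorem~\ref{thm:main}) requires $m, n \geq 2$.

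For the later composition step, I will also note that the tiling has a product structure: the blocks form an $s \times t$ array. This means the checkerboard coloring from Lemma~\ref{lem:even} applies verbatim. Color block $(i,j)$ black or white according to the parity of $i+j$, so that blocks sharing an edge receive disjoint alphabets.
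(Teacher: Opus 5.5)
Your proof is correct and matches the paper's argument: write $m$ and $n$ each as a sum of 2's and 3's, then cut the grid into the product of the resulting row and column bands. You also spell out the one-dimensional decomposition (at most one part of size $3$), which the paper states without detail.
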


\begin{proof}
Write $m = 2a + 3b$ and $n = 2c + 3d$ for appropriate non-negative integers (possible for all $m, n \geq 2$). The grid decomposes into a product of row bands and column bands of widths 2 or 3, yielding blocks of the four required sizes.
\end{proof}

\begin{proof}[Proof of Theorem~\ref{thm:main}]
Tile the grid using Lemma~\ref{lem:tiling}. Apply the checkerboard isolation scheme from Lemma~\ref{lem:even}, generalized to blocks of mixed sizes: adjacent blocks use disjoint alphabets, preventing cross-block edges. Equip each block with its corresponding adversary (Lemmas~\ref{lem:2x2},~\ref{lem:3x3},~\ref{lem:2x3}, and the transposed variant). The block adversaries coordinate so that ambiguity is maintained across the entire grid until the last cell is read.
\end{proof}

\section{Conclusion}

We have shown that cycle detection on grid graphs requires reading every cell, matching the trivial upper bound. The proof relies on a tiling-based adversary construction that decomposes the grid into independent blocks, each maintaining local ambiguity. The checkerboard alphabet isolation ensures no information leaks between blocks.

The construction is game-theoretic in nature: each block adversary plays a sequential game against the algorithm, adapting its responses to the algorithm's query order. The state machine formulation makes this game explicit---the adversary's strategy is not a fixed worst-case input but a reactive policy that counters any possible algorithm. This adaptive quality is essential: no single static grid coloring forces all algorithms to read every cell, but the adversary can force \emph{each} algorithm to do so by tailoring its responses.

The construction uses an alphabet of size four: two colors per block ($\{\texttt{a}, \texttt{b}\}$ and $\{\texttt{c}, \texttt{d}\}$) to isolate adjacent blocks. Whether the lower bound can be established with fewer colors remains open.

\section*{Acknowledgements}

This research did not receive any specific grant from funding agencies in the public, commercial, or not-for-profit sectors.


\begin{thebibliography}{9}

\bibitem{yao77}
A.~C.~Yao,
\textit{Probabilistic computations: Toward a unified measure of complexity},
in Proceedings of the 18th IEEE Symposium on Foundations of Computer Science (FOCS), pp.~222--227, 1977.

\bibitem{blum-impagliazzo}
M.~Blum and R.~Impagliazzo,
\textit{Generic oracles and oracle classes},
in Proceedings of the 28th IEEE Symposium on Foundations of Computer Science (FOCS), pp.~118--126, 1987.

\bibitem{leetcode1559}
LeetCode,
\textit{Problem 1559: Detect Cycles in 2D Grid},
\url{https://leetcode.com/problems/detect-cycles-in-2d-grid/}.

\end{thebibliography}
\end{document}